\documentclass[12pt]{article}
\usepackage[T1]{fontenc}
\usepackage{times} 
\usepackage[cp1250]{inputenc}  
\usepackage{amsmath}
\usepackage{amsfonts}
\usepackage{amsthm}
\def\Z{{\mathbb Z}}

\def\R{{\mathbb R}}
\def\N{{\mathbb N}}

\def\CA{{\mathcal A}}

\def\CH{{\mathcal H}}

\DeclareMathSymbol\crossrt{\mathrel}{AMSb}{"6E}
\DeclareMathSymbol\crosslt{\mathrel}{AMSb}{"6F}


\def\const{\hbox{const}}

\def\dm#1{(d\!-\!#1)}

\newtheorem{lemma}{Lemma}[section]

\newtheorem{proposition}[lemma]{Proposition}

\newtheorem{definition}[lemma]{Definition}

\newtheorem{remark}[lemma]{Remark}

\newbox\ncintdbox \newbox\ncinttbox
\setbox0=\hbox{$-$} \setbox2=\hbox{$\displaystyle\int$}
\setbox\ncintdbox=\hbox{\rlap{\hbox
    to \wd2{\kern-.1em\box2\relax\hfil}}\box0\kern.1em}
\setbox0=\hbox{$\vcenter{\hrule width 4pt}$}
\setbox2=\hbox{$\textstyle\int$}
\setbox\ncinttbox=\hbox{\rlap{\hbox
    to \wd2{\kern-.14em\box2\relax\hfil}}\box0\kern.1em}
\newcommand{\ncint}{\mathop{\mathchoice{\copy\ncintdbox}
    {\copy\ncinttbox}{\copy\ncinttbox}
    {\copy\ncinttbox}}\nolimits}


\parindent 0pt
\title{Spectral action \\ for scalar perturbations of Dirac operators.}
\author{
Andrzej Sitarz\thanks{Partially supported by MNII grants 189/6.PRUE/2007/7 and N 201 1770 33}, 
Artur Zając\thanks{Supported by an International PhD Program ,,Geometry and Topology in Physical Models''} \\
{\em Institute of Physics, Jagiellonian University}\\
{\em Reymonta 4, 30-059 Krak\'ow, Poland} \\
}
\date{}
\begin{document}
\maketitle
\begin{abstract}
We investigate the leading terms of the spectral action for odd-dimensional 
Riemannian spin manifolds with the Dirac operator perturbed by a scalar 
function. We calculate first two Gilkey-de Witt coefficients and make explicit
calculations for the case of $n$-spheres with a completely symmetric Dirac. 
In the special case of dimension $3$, when such perturbation corresponds 
to the completely antisymmetric torsion we carry out the noncommutative 
calculation following Chamseddine and Connes and study the case of 
$SU_q(2)$.
\end{abstract}
\noindent{MSC 2000: 58B34, 81T75} \\
\noindent{PACS: 02.40.Gh} \\ 
\noindent{Keywords: {\em spectral geometry, noncommutative geometry}}

\section{Introduction}
In the setup of noncommutative differential geometry developed from Alain Connes' 
idea of spectral triples \cite{CoRe} the fundamental ingredient of the 
construction is based on the spin geometry and the Dirac operator. Although 
more general Dirac-type operators are also admitted, the fundamental theorems 
\cite{CoRC,BVF} give equivalence only between commutative spectral triples 
and usual spin geometries. 

Usually, the Dirac operator on spin manifold is taken to be the operator, which 
comes from the Levi-Civita connection on the tangent bundle. It appears, however,
that a wide class of generalized operators, which come, for example, from connections 
with torsions also satisfies the axioms of spectral triples. Although in the classical
differential geometry this might be irrelevant as we can easily pass to the language
of differential geometry and select only such Diracs which come from the torsion-free 
connections, in the pure noncommutative situation this method is no longer available.

Therefore, the class of Dirac operators with torsion is much larger than the classical 
Diracs and their perturbation. Classically, if one assumes that torsion is totally 
antisymmetric (which has some natural geometric justification) the torsion Dirac operators 
appear only for dimension superior than $2$. Whether this is the case in the noncommutative 
situation is not clear. In three dimensions the totally antisymmetric torsion tensor has 
only one component, hence the perturbation of the Dirac operator is only by a function.

We show, that such perturbations are admissible in every odd dimension and calculate the
correction to the spectral action in the classical case as well as in the noncommutative
situation (the latter in dimension $3$, with the example of $SU_q(2)$. This extends recent
calculation of spectral action for compact manifolds with torsion (with and without 
boundaries) which were carried out in \cite{HPS,ILV}, though in dimesnions higher than 
$3$ the scalar perturbation has no clear geometrical meaning.

\subsection{Scalar perturbations of spectral triples}

Let us begin with the definition of a {\em real} spectral triple. 

\begin{definition}
Let be $\CA$ be an algebra, $\CH$ a Hilbert space and $\pi$ a faithful representation 
of $\CA$ on $\CH$. The geometric data of a real spectral triple is given by $(\CA,\pi,\CH,D,J)$ 
where $D$ is a selfadjoint unbounded operator on $\CH$, $J$ is antilinear unitary operator, 
an integer modulo $8$ (dimension of the spectral triple) and the following relations:
\begin{itemize}
\item $\forall a \in \CA$, $[D,\pi(a)]$ is a bounded operator,
\item $\forall a,b \in \CA$, $[J^{-1} \pi(a) J, \pi(b)] =0$, (CC)
\item $JD = \epsilon_D DJ$, $\epsilon_D = \pm1$ depending on the dimension of the triple, 
\item $\forall a,b \in \CA$, $\left[ \pi(a), [D, \pi(b)] \right] = 0$,
\item $J^2 = \epsilon_J 1$, $\epsilon_J= \pm 1$ depending on the dimension of the spectral triple,
\item if the dimension of the triple is even, then there exists $\gamma=\gamma^\dagger$,
      such that:
      $$ \gamma^2=1, \;\; D\gamma=-\gamma D, \;\; [\gamma,\pi(a)]=0, \;\; J \gamma = 
      \epsilon_\gamma \gamma J, $$
      where $\epsilon_\gamma=\pm 1$ depends also on the dimension of the spectral 
      triple.
\end{itemize}
\end{definition}

In addition to the listed above conditions the spectral triple must satisfy a series of
additional requirements: for their list and meaning, as well as the list of all signs
we refer to the literature \cite{BVF,CoRe}. In the case when $\CA$ is an algebra of 
smooth functions on a manifold we replace the commutator condition (CC) by demanding
that $\pi(a)^\dagger = J^{-1} \pi(a) J$ for every $a \in \CA$.

One of the earliest results, which motivated the construction of spectral geometry 
was the so-called reconstruction theorem (later, under certain conditions generalized 
to equivalence, see \cite{CoRe}), which stated that for a compact spin manifold $M$,
with the algebra $\CA=C^\infty(M)$, $\CH$ being the Hilbert space of square integrable 
sections of the spinor bundle, $J$ the implementation of the involution in the Clifford 
algebra, $\gamma$ the natural $\Z_2$ grading of the Clifford algebra and $D$ the Dirac 
operator associated to the Levi-Civita connection, the data $(\CA,\CH,D,J,\gamma)$ form 
a commutative spectral triple.

Now, we have:

\begin{proposition}
With the same data as in the definition above, in the case of odd $KO$ dimension, all
relations are satisfied if we replace $D$ by:
$$D_\Phi = D +\Phi + \epsilon_D J \Phi J^{-1},$$ 
where $\Phi=\pi(\phi)$, for a selfadjoint element $\phi=\phi^*$ of the algebra $\CA$.
\end{proposition}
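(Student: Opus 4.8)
The plan is to go through the defining relations one at a time, noting first that the two conditions not involving $D$ — the commutant condition (CC) and $J^2=\epsilon_J 1$ — concern only $J$ and $\pi$ and are therefore untouched by the replacement $D\mapsto D_\Phi$. Writing $\Psi=\epsilon_D J\Phi J^{-1}$, so that $D_\Phi=D+\Phi+\Psi$, it remains to verify self-adjointness of $D_\Phi$, the twisted commutation $JD_\Phi=\epsilon_D D_\Phi J$, boundedness of $[D_\Phi,\pi(a)]$, and the first-order condition.

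For self-adjointness I would note that $D^\dagger=D$ by hypothesis and $\Phi^\dagger=\pi(\phi^*)=\Phi$, since $\phi$ is self-adjoint and $\pi$ is a $*$-representation; for the cross term, antiunitarity of $J$ gives $J^\dagger=J^{-1}$, and a short computation then yields $\Psi^\dagger=\epsilon_D J\Phi J^{-1}=\Psi$. Thus $D_\Phi$ is self-adjoint. The twisted commutation I would check by direct expansion: using $JD=\epsilon_D DJ$ and $J^2=\epsilon_J 1$ (so that $J^{-1}=\epsilon_J J$) one finds that the difference collapses to $JD_\Phi-\epsilon_D D_\Phi J=\epsilon_D\Phi(\epsilon_J J^{-1}-J)=0$. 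Here the precise coefficient $\epsilon_D$ in front of $\Psi$ is exactly what produces the cancellation, which is the structural reason the reality term must be present in the prescribed form.

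The key input for the last two relations is the observation that, by (CC) together with $J^{-1}=\epsilon_J J$, the cross term $\Psi=\epsilon_D J\Phi J^{-1}=\epsilon_D J^{-1}\Phi J$ lies in the commutant of $\pi(\CA)$, so $[\Psi,\pi(a)]=0$ for every $a$. Consequently $[D_\Phi,\pi(a)]=[D,\pi(a)]+[\Phi,\pi(a)]$: the first summand is bounded by hypothesis and the second equals $\pi([\phi,a])$, a product of bounded operators, so $[D_\Phi,\pi(a)]$ is bounded. For the first-order condition the same decomposition is decisive: $[\Phi,\pi(a)]=\pi([\phi,a])$ still belongs to $\pi(\CA)$, and by (CC) it commutes with the opposite action $J\pi(b)J^{-1}$, so the commutator of $[D_\Phi,\pi(a)]$ with $J\pi(b)J^{-1}$ reduces to that of $[D,\pi(a)]$, which vanishes by the first-order condition for $D$.

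I expect the main obstacle to be conceptual rather than computational: one must see that the reality term $\Psi$, forced upon us by the twisted commutation and self-adjointness, does not itself spoil boundedness or the first-order condition, and that the genuinely algebraic term $\Phi$ never leaves the image $\pi(\CA)$ under commutation with $\pi(a)$. Both facts are consequences of (CC), which simultaneously removes $\Psi$ from $[D_\Phi,\pi(a)]$ and keeps $[\Phi,\pi(a)]$ in the commutant of the opposite algebra; the odd $KO$-dimensional signs enter only through the single identity $\epsilon_J J^{-1}=J$ used above.
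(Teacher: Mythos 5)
Your proof is correct and follows essentially the same route as the paper's: a direct verification of the twisted commutation $JD_\Phi=\epsilon_D D_\Phi J$ using $J^2=\epsilon_J 1$, and a reduction of the boundedness and first-order conditions to the two facts that $J\Phi J^{-1}$ lies in the commutant of $\pi(\CA)$ while $[\Phi,\pi(a)]$ stays inside $\pi(\CA)$. You are merely more explicit than the paper about self-adjointness and the analytic conditions, which the paper dispatches in one line by noting that $D_\Phi$ is a bounded (self-adjoint) perturbation of $D$.
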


 \begin{proof}
To verify our claim it suffices to verify the algebraic relations, which depend 
on the Dirac operator. For the remaining part of the axioms, spectral and analytic,
it suffices to observe that $\Phi$ is a bounded operator and hence $D_\Phi$ is a
bounded perturbation of $D$. First, we calculate $J D_\Phi$:

\begin{align*} 
J D_\Phi &= J (D +\Phi + \epsilon_D J \Phi J^{-1}) = 
\epsilon_D D J + J\Phi + \epsilon_J \epsilon_D \Phi J^{-1} \\
&= \epsilon_D (D + \epsilon_D J \Phi J^{-1} + \Phi) J.
\end{align*}

and then the order one condition:

\begin{align*} 
\left[ [D_\Phi, \pi(a)], JbJ^{-1} \right] &= 
\left[ [D +\Phi + \epsilon_D J \Phi J^{-1}, \pi(a)], JbJ^{-1} \right] \\
&= \left[ [\Phi, \pi(a)], JbJ^{-1} \right] = 0,
\end{align*}

where we have used that $\Phi \in \CA$ and conjugation by $J$ maps the 
elements of $\CA$ to its commutant.
\end{proof}

\begin{remark}
Note that in the case of manifolds, when $J$ maps elements of $\CA$ to $\CA$,
the perturbation $\Phi + \epsilon_D J \Phi J^{-1} = \Phi (1+\epsilon_D)$,
and it vanishes identically in the $KO$ dimension $1$ and $5$ modulo $8$.
\end{remark}

\section{The spectral action for manifolds with $D_\Phi$.}

We shall calculate here the leading three coefficients of the spectral
action for scalar perturbations of the Dirac operator on odd-dimensional
compact spin manifolds. We do not assume any conditions on the reality 
structure, hence the perturbation by a real function $\Phi$ is possible 
in any odd dimension.

Therefore $D_\Phi = D + \Phi$, $\Phi = C^\infty(M,\R)$. Let us recall the 
Schr\"odinger-Lichnerowicz formula, applied to $D_\Phi$:

$$
\begin{aligned}
(D_\Phi)^2 &=  D^2 + D\Phi + \Phi D + \Phi^2 \\
    &=  \Delta + \frac{1}{4} R + [D, \Phi] + 2 \Phi D + \Phi^2.
\end{aligned}
\label{TLichner}
$$
where $\Delta$ is the spinorial Laplacian. In order to calculate the 
leading term parts of the spectral action, we use the standard techniques 
to calculate the heat-kernel coefficients for the second-order differential
operators of the Laplace type over spin manifolds. We used both the explicit
formulas obtained by Barth \cite{Barth}, as well as the general results 
presented by Vassilevich \cite{Vas}. 

In our notation $d$ is the dimension of the manifold, $n$ denoted the dimension
of the fibres of the spinor bundle. The result, up to terms, which are total 
divergence (which vanish, and since we consider manifolds without boundary) is:

\begin{equation}
\begin{aligned}
\hbox{$[a_1]$} =& (4\pi)^{-\frac{d}{2}}\, n \, \left( - \frac{1}{12} R + (d-1) \phi^2 \right), \\
\hbox{$[a_2]$} =& (4\pi)^{-\frac{d}{2}}\, \frac{n}{180} 
\left( \frac{5}{2} R^2 - 4 R_{ij} R^{ij} - \frac{7}{2} R_{ijkl} R^{ijkl} \right. \\
& \phantom{\frac{5}{2}} + 120 (d-1)(d-3) \phi^4 + 60 (3-d) R \, \phi^2 \\
& \left. \phantom{\frac{5}{2}}  + 120 (d-1) (\nabla_i \phi)(\nabla^i \phi) \right).
\label{a_d}
\end{aligned}
\end{equation}
The above result has been obtained earlier (for $[a_1]$) by many authors \cite{KW,Gusynin,Donn} 
and also \cite{Yajima1,Yajima2} in the case of dimension $3$, where the scalar perturbation 
corresponds to torsion. The $[a_2]$ coefficient in dimension $3$ is again a special case
of Dirac operator with an antisymmetric torsion.

An interesting situation happens in dimension $3$.

\begin{lemma}
The heat kernel coefficients for scalar perturbation of Dirac in dimension $d=3$ read:
\begin{equation}
\begin{aligned}
\hbox{$[a_1]$} =& 2 (4\pi)^{-\frac{3}{2}}\,  \, \left( - \frac{1}{12} R + 2 \phi^2 \right), \\
\hbox{$[a_2]$} =& (4\pi)^{-\frac{3}{2}}\, \frac{8}{3} (\nabla_i \phi)(\nabla^i \phi).
\label{a_13}
\end{aligned}
\end{equation}
\end{lemma}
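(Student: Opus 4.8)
The plan is to derive both coefficients by specializing the general $d$-dimensional result \eqref{a_d} to $d=3$. The only extra input required is the rank of the spinor bundle: on a three-dimensional spin manifold the irreducible spinor representation has dimension $n = 2^{\lfloor 3/2\rfloor} = 2$, so I would simply set $n=2$ and $d=3$ throughout \eqref{a_d} and read off what survives.

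For $[a_1]$ this is immediate. The factor $(d-1)$ multiplying $\phi^2$ becomes $2$, and the overall factor $n$ becomes $2$, so the first line of \eqref{a_d} collapses directly to $2(4\pi)^{-3/2}\bigl(-\tfrac{1}{12}R + 2\phi^2\bigr)$, with no cancellation needed.

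The content of the lemma lies in $[a_2]$, where the value $d=3$ kills two of the three $\phi$-dependent terms. The quartic self-interaction carries the factor $(d-1)(d-3)$, which vanishes since $d-3=0$, and the curvature coupling $R\,\phi^2$ carries the factor $(3-d)$, which vanishes for the same reason. Hence the whole potential for $\phi$ disappears in dimension three, and the sole surviving perturbation term is the kinetic one; its coefficient evaluates to $\frac{n}{180}\,120(d-1) = \frac{2}{180}\cdot 240 = \frac{8}{3}$, yielding $(4\pi)^{-3/2}\,\frac{8}{3}(\nabla_i\phi)(\nabla^i\phi)$ exactly as stated. I would present the vanishing of the two ``potential'' terms as the conceptual heart of the statement, since it is precisely what singles out dimension three.

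What remains is to dispose of the purely gravitational block $\tfrac{5}{2}R^2 - 4R_{ij}R^{ij} - \tfrac{7}{2}R_{ijkl}R^{ijkl}$ appearing in \eqref{a_d}, which carries no $\phi$ and so belongs to the unperturbed gravitational sector rather than to the perturbation recorded by the lemma. The natural tool here is that in three dimensions the Weyl tensor vanishes, so the Riemann tensor is determined by the Ricci tensor and scalar curvature, giving the contraction identity $R_{ijkl}R^{ijkl} = 4R_{ij}R^{ij} - R^2$; substituting it collapses the three curvature invariants to a combination of $R^2$ and $R_{ij}R^{ij}$ alone. I expect the careful separation of this $\phi$-independent curvature remainder from the genuine perturbation contribution --- that is, making precise the sense in which the lemma reports only the latter --- to be the one delicate bookkeeping point, whereas the vanishing of the potential terms and the arithmetic of the kinetic coefficient follow mechanically from putting $d=3$.
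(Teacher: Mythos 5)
Your handling of the $\phi$-dependent terms is exactly the paper's argument and is correct: with $n=2$ and $d=3$ the factors $(d-1)(d-3)$ and $(3-d)$ kill the quartic self-interaction and the $R\,\phi^2$ coupling, and $\frac{n}{180}\cdot 120(d-1)=\frac{8}{3}$ gives the kinetic term. The real issue is the purely gravitational block, and there your proposal does not prove the lemma as stated: the lemma asserts that $[a_2]$ \emph{equals} the kinetic term, so declaring the curvature remainder to be part of the ``unperturbed gravitational sector'' is a reinterpretation, not a proof. To prove the stated equality you would have to show that $\frac{5}{2}R^2-4R_{ij}R^{ij}-\frac{7}{2}R_{ijkl}R^{ijkl}$ vanishes in $d=3$ modulo total divergences, which is what the paper's proof claims.

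Your hesitation, however, is better founded than you may realize, because that vanishing is false, and the paper's own proof of it rests on a wrong identity. In $d=3$ there is only \emph{one} relation among the three invariants, namely the one you use, $R_{ijkl}R^{ijkl}=4R_{ij}R^{ij}-R^2$; the vanishing of the Weyl tensor gives nothing further, since with the $d=3$ coefficients $|C|^2$ expands into exactly the same combination $R_{ijkl}R^{ijkl}-4R_{ij}R^{ij}+R^2$ as the Gauss--Bonnet integrand. The paper instead invokes a second identity, $R_{ijkl}R^{ijkl}-2R_{ij}R^{ij}+\frac{1}{3}R^2=0$, which is the \emph{four}-dimensional Weyl-square formula and fails pointwise in three dimensions (on $S^2\times S^1$ it equals $\frac{4}{3}$). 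Substituting the single true identity, as you do, leaves $\frac{n}{180}\left(6R^2-18R_{ij}R^{ij}\right)$, i.e.\ an extra contribution $(4\pi)^{-\frac{3}{2}}\,\frac{1}{15}\left(R^2-3R_{ij}R^{ij}\right)$ in $[a_2]$. This is not a total divergence and cannot be discarded: by Cauchy--Schwarz $3R_{ij}R^{ij}\ge R^2$ pointwise, with equality iff the metric is Einstein, so this term integrates to a strictly negative number on any non-Einstein $3$-manifold such as $S^2\times S^1$. Hence the lemma as printed holds only for Einstein metrics --- in particular for $S^3$, the only case against which the paper later cross-checks --- and in general $[a_2]$ must include this curvature remainder. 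In short: your reduction is the sound part of the mathematics; the gap in your write-up (not establishing the cancellation) is unfixable because the cancellation is false, and the paper's proof achieves it only through the spurious four-dimensional identity.
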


\begin{proof}
Indeed, observe that many terms vanish in the case of $d=3$. Additionally, 
both the Weyl tensor (and so its square) as well as the Gauss-Bonnet integrand 
must vanish:
\begin{equation}
\begin{aligned}
& R_{ijkl} R^{ijkl} - 2 R_{ij} R^{ij} + \frac{1}{3} R^2 = 0, \\
& R_{ijkl} R^{ijkl} - 4 R_{ij} R^{ij} + R^2 = 0,
\end{aligned}
\end{equation}
Using it we can see that the terms, which depend only on the Riemann and Ricci tensors and the
scalar curvature add up to zero. Hence only the kinetic term for $\Phi$ remains.
\end{proof}
 
Of course, in the case of three dimensional manifolds the coefficient $[a_2]$ will not be present 
in the leading terms of the perturbative expansion of the spectral action with respect to the cut-off 
parameter $\Lambda$. However, when considering, for example, the spectral action on $M \times S^1$ 
with the product geometry, it will appear as the scale invariant term. 

\subsection{Explicit spectral action for spheres}

In the special case of the $d$-dimensional spheres (still $d$ being an odd number), we
can independently calculate the perturbative expansion of the spectral action directly
from the spectrum of the Dirac. The method, based on the Poisson summation formula,
which was first presented in \cite{CoCh} and then used by \cite{MarPie} to study 
cosmological aspects of the spectral action will allow us to have explicit formulas for 
the case of $\phi = \const$.

Let us recall that the spectrum of the equivariant Dirac operator on a unit sphere
of an odd dimension $d$ is given ($\lambda_n$ is the eigenvalue, $N(n)$ denotes 
the multiplicity):

$$ \lambda_\pm(n)  = \pm (\frac{d}{2} +n), \;\;\; 
   N(n) = \frac{2^{\frac{d-1}{2}} (n+d-1)!}{n! (d-1)!}, \;\;n \geq 0. 
$$

When we consider the scalar perturbation of $D$ by a constant $t \in \R$, the
spectrum is, accordingly changed to:

$$ 
\lambda_\pm(n)  = \pm (\frac{d}{2} + n \pm t), \;\;\; 
N(n) = \frac{2^{\frac{d-1}{2}} (n+d-1)!}{n! (d-1)!}, \;\;n \geq 0.
$$

Let us write an explicit formula we have for the spectral action defined by 
the cut-off function $f$: 

$$ {\cal S}_{S^d}(t) = 
\sum_{n \geq 0} N(n) \left( 
  f \left( \frac{ n+\frac{d}{2}+t}{\Lambda} \right) 
+ f \left( \frac{-n-\frac{d}{2}+t}{\Lambda} \right) 
\right). 
$$

Observe that:
$$ -n-\frac{d}{2}+t  = (-n-d)+ \frac{d}{2} + t, $$
and that 
$$ N(n) = N(-n-d), $$
as $d$ is odd, so $d-1$ is even. Therefore, we we can rewrite it as 
$$ {\cal S}_{S^d}(t) =
  \sum_{n \in \Z} N(n) f \left( \frac{n+\frac{d}{2}+t}{\Lambda} \right).
$$  
Observe that the terms for $n=-1,\ldots,-d+1$ do not appear as these
are zeros of the function $N$ counting multiplicities.

Now, we use the Poisson summation formula to the sum over all integers. 
Define $g(x) = f(x+\alpha)$, $\alpha \in \R$. 
We have:
$$ \hat{g}(k) = \frac{1}{2\pi} \int g(x) e^{-ikx} = 
                \frac{1}{2\pi} \int f(x+\alpha) e^{-ikx}
              = \hat{f}(k) e^{ik\alpha}.   
$$ 
and
$$ \hat{f}^{(p)}(k) = \frac{1}{2\pi} \int (-ix)^p f(x) e^{-ikx} = 
   \frac{1}{2\pi} \int (-i(x+\alpha))^p f(x+\alpha) e^{-ikx} e^{-ik\alpha},   
$$ 
where $\hat{f}^{(p)}(k)$ denotes $(\partial_k)^p \hat{f}(k)$.

Using now the expansion in powers of $\Lambda$ up to terms of order 
$o(\Lambda^{-1})$:
$$ \sum_{n \in \Z} g\left(\frac{n}{\Lambda}\right) = \Lambda \hat{g}(0) + o(\Lambda^{-1}), $$ 
and for the derivatives:
$$ \sum_{n \in \Z} (-in)^k g\left(\frac{n}{\Lambda}\right) = \Lambda^{k+1} \hat{g}^{(k)}(0) + o(\Lambda^{-1}), $$ 

We can now use the following lemma, in order to calculate the three leading terms 
of the perturbative spectral action.

\begin{lemma}
For each $t \in \R$ we have:
$$ 
\begin{aligned}
&(n+d-1)(n+d-2)\cdots(n+1) = (n+\frac{d}{2} +t)^{d-1} 
  - \dm{1} t (n+\frac{d}{2} + t)^{d-2} \\
& + \left( \frac{1}{2}\dm{1}\dm{2} t^2 - \frac{1}{24}d\dm{1}\dm{2} \right)(n+\frac{d}{2} +t)^{d-3} \\
& + \left( \frac{1}{24}d\dm{1}\dm{2}\dm{3}t - \frac{1}{6}\dm{1}\dm{2}\dm{3} t^3 \right) (n+\frac{d}{2} +t)^{d-4} \\
& + \left(\frac{1}{24}\dm{1}\dm{2}\dm{3}\dm{4} t^4  
- \frac{1}{48}d \dm{1}\dm{2}\dm{3}\dm{4} t^2 \right. \\
& \phantom{xx} \left.+ \frac{1}{5760}d \dm{1}\dm{2}\dm{3}\dm{4}(5d\!+\!2) \right) 
 (n+\frac{d}{2} +t)^{d-5} + o(n^{d-5}).
\end{aligned}
$$
\end{lemma}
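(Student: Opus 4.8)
The plan is to treat this as a purely algebraic identity between polynomials in $n$ (with $d$ and $t$ as parameters) and to exploit a symmetry that collapses most of the bookkeeping. First I would introduce the shifted variables $u = n + \frac{d}{2} + t$ (the variable in which the right-hand side is organized) and $y = u - t = n + \frac{d}{2}$, so that the left-hand side becomes a product in $y$,
\begin{equation*}
\prod_{k=1}^{d-1}(n+k) = \prod_{k=1}^{d-1}\bigl(y + (k - \tfrac{d}{2})\bigr).
\end{equation*}
The roots of this polynomial in $y$ are the numbers $\tfrac{d}{2} - k$ for $k=1,\ldots,d-1$; since $d$ is odd these are the nonzero half-integers $\pm\frac12, \pm\frac32, \ldots$, a set symmetric about $0$. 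As there are $d-1$ of them and $d-1$ is even, the roots pair up as $\pm r$, so the product is an \emph{even} polynomial in $y$,
\begin{equation*}
\prod_{k=1}^{d-1}(n+k) = y^{d-1} + e_2\, y^{d-3} + e_4\, y^{d-5} + \cdots,
\end{equation*}
in which only even gaps occur and the $e_{2j}$ are the elementary symmetric functions of the shifts $c_k = k - \frac{d}{2}$. This single observation kills all odd-indexed coefficients and is the structural heart of the argument.

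Next I would evaluate the two coefficients $e_2$ and $e_4$ that survive down to order $y^{d-5}$. Writing $p_2 = \sum_{k=1}^{d-1} c_k^2$ and $p_4 = \sum_{k=1}^{d-1} c_k^4$ for the relevant power sums (the odd ones vanish by symmetry, and $p_1 = \sum c_k = 0$), Newton's identities together with $e_1 = e_3 = 0$ give $e_2 = -\frac12 p_2$ and $e_4 = \frac18 p_2^2 - \frac14 p_4$. Both $p_2$ and $p_4$ follow from the standard Faulhaber formulas for $\sum k^2$ and $\sum k^4$, and I expect to obtain
\begin{equation*}
p_2 = \frac{d\,\dm{1}\,\dm{2}}{12}, \qquad
p_4 = \frac{d\,\dm{1}\,\dm{2}\,(3d^2 - 6d - 4)}{240},
\end{equation*}
whence $e_2 = -\frac{1}{24}\,d\,\dm{1}\,\dm{2}$ and, after combining and factoring, $e_4 = \frac{1}{5760}\,d\,\dm{1}\,\dm{2}\,\dm{3}\,\dm{4}\,(5d+2)$.

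Finally I would pass from powers of $y$ back to powers of $u$ by substituting $y = u - t$ and applying the binomial theorem to each of $y^{d-1}, y^{d-3}, y^{d-5}$, retaining only terms down to $u^{d-5}$; since $u = n + O(1)$ all lower terms are $O(n^{d-6})$ and are absorbed into the stated $o(n^{d-5})$, so the claimed formula is simply the truncation of an exact polynomial identity. Collecting the coefficient of each power $u^{d-1-j}$ then reproduces the five lines of the expansion: the pure $t^{j}$ terms come from $y^{d-1}=(u-t)^{d-1}$, the mixed $t$-terms from $e_2(u-t)^{d-3}$, and the single $t$-independent contribution appears as $e_4$ in the coefficient of $u^{d-5}$. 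The one genuinely nontrivial computation — and the step I expect to be the main obstacle — is establishing the value of $e_4$: it requires the degree-four power sum $p_4$ and the nonobvious factorization $5d^3 - 33d^2 + 46d + 24 = \dm{3}\,\dm{4}\,(5d+2)$ that produces both the exact denominator $5760$ and the factor $(5d+2)$. Everything else is routine manipulation of binomial coefficients.
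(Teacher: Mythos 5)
Your proposal is correct, and in fact it supplies something the paper does not: the authors write only ``We skip the proof based on explicit calculations,'' so there is no argument in the paper to compare against beyond brute-force expansion. Your route is a genuine, well-structured proof of the same identity. I verified its numerical content. The shifts $c_k = k - \tfrac{d}{2}$, $k=1,\dots,d-1$, do pair up as $\pm r$ for odd $d$, so the product is even in $y = n+\tfrac{d}{2}$ and $e_1=e_3=0$; Newton's identities then give $e_2=-\tfrac12 p_2$ and $e_4=\tfrac18 p_2^2-\tfrac14 p_4$; the power sums come out as $p_2=\tfrac{1}{12}d\dm{1}\dm{2}$ and $p_4=\tfrac{1}{240}d\dm{1}\dm{2}(3d^2-6d-4)$, whence $e_2=-\tfrac{1}{24}d\dm{1}\dm{2}$ and
$$
e_4=\frac{d\dm{1}\dm{2}}{5760}\bigl(5d^3-33d^2+46d+24\bigr)
=\frac{1}{5760}\,d\dm{1}\dm{2}\dm{3}\dm{4}(5d\!+\!2),
$$
the factorization being exact as you claim. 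Substituting $y=u-t$ with $u=n+\tfrac d2+t$ and collecting binomial coefficients reproduces every line of the lemma, including the mixed term $\tfrac{1}{24}d\dm{1}\dm{2}\dm{3}t$ (from $e_2\cdot(-(d\!-\!3)t)$) and the $-\tfrac{1}{48}d\dm{1}\dm{2}\dm{3}\dm{4}t^2$ term (from $e_2\binom{d-3}{2}t^2$), with truncation error $O(n^{d-6})=o(n^{d-5})$. What your approach buys over the paper's unstated ``explicit calculation'' is that the evenness of the polynomial in $y$ explains \emph{a priori} why only the power sums $p_2$ and $p_4$ are needed and why the $t$-independent coefficients skip every other order, rather than having these cancellations emerge accidentally from a long expansion; it also localizes the only genuinely laborious steps (the Faulhaber sums and the cubic factorization) into two checkable identities. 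The one stylistic caveat is that your write-up is phrased as a plan (``I would,'' ``I expect to obtain''); since every asserted value is correct, this is a presentational matter, not a gap.
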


We skip the proof based on explicit calculations. 

For an even function $f$ the terms of the order $\Lambda^{d+1-2k}$ shall 
vanish (as the value of the Fourier transform at $0$ is just the integral 
of an odd function over $\R$. Therefore the first three nontrivial terms
are (up to the overall sign $(-1)^{\frac{d-1}{2}}$, which arises from 
the Fourier transform):

\begin{equation} 
\begin{aligned}
&{\cal S}_{S^d}(t) = \Lambda^d  \hat{f}^{(d-1)}(0) \\
 &\phantom{x}-\Lambda^{d-2} \frac{\dm{1}\dm{2}}{2} \left( t^2 - \frac{d}{12} \right) \hat{f}^{(d-3)}(0)\\
 &\phantom{x}+\Lambda^{d-4}\frac{\dm{1}\dm{2}\dm{3}\dm{4}}{24}  
 \left( t^4 - \frac{d}{2} t^2 + \frac{d (5d\!+\!2)}{240} \right) \hat{f}^{(d-5)}(0).
 \end{aligned}
 \label{poiss}
 \end{equation}
 
We can easily compare this result with the one obtained previously, which could be treated
as the independent test that the results are correct. Let us recall that for the unit 
$d$-dimensional sphere the following identities are true:

$$ (R_{ijkl})^2 = 2d\dm{1}, \;\;(R_{ij})^2=d\dm{1}^2, \;\;\;R=d\dm{1}. $$

Therefore, using the formula (\ref{a_d}) we calculate:

\begin{equation}
\begin{aligned}
\hbox{$[a_1]$} & \sim (-\frac{1}{12}d\dm{1} + \dm{1}t^2 ) = \dm{1} (t^2 - \frac{d}{12}), \\
\hbox{$[a_2]$} & \sim (120 \dm{1}\dm{3}t^4 - 60d\dm{1}\dm{3}t^2 \\
& \phantom{xxxxxxxx} + \frac{5}{2}d\dm{1} -7d\dm{1}
-4 d \dm{1}^2) \\
& \phantom{xxx} = 120 \dm{1}\dm{3} \left(t^4  - \frac{d}{2} t^2 +  \frac{d(5d\!+\!2)}{240} \right).
\end{aligned}
\end{equation}

As we can see it is exactly the result obtained above in (\ref{poiss}).

\subsection{Noncommutative spectral action}

Assume that we have a real spectral triple $(\CA,\CH,D)$ with a simple 
dimension spectrum. By analogy with the commutative example we propose 
to call a perturbation of the Dirac 
operator by a selfadjoint element of the algebra, $\CA \ni \Phi = \Phi^*$, 
a torsion-perturbed Dirac, 
$D_\Phi = D + \Phi + J \Phi J^{-1}$ (recall that $JD=DJ$ in dimension $3$). 
It is clear that $D_T$ is still a good Dirac operator for the real spectral 
triple\footnote{The only possible exception is the axiom of the existence of the 
Hochschild cycle. However, this fails also in the case of the {\em usual} 
perturbation of the Dirac by a one-form.}. We call $F=\hbox{sign}(D)$ and 
assume that $[F,a] \in OP^{-\infty}$  for any $a \in \CA$. For simplicity we
assume that the kernel of $D_\Phi$ is empty (if it is not the case one can
correct $D_\phi$ by a finite rank operator, which is a projection on the kernel,
for details see \cite{EILS}).

We have:
\begin{proposition}
\label{TorSpAc}
The coefficients of the full perturbative spectral action on a real spectral 
triple $(\CA,\CH,D_\Phi)$ are:
\begin{align*}
&(a) \hspace{5mm} \ncint \vert D_{\Phi}\vert^{-3}  = \ncint |D|^{-3}.\\
&(b) \hspace{5mm} \ncint \vert D_{\Phi}\vert^{-2}  = \ncint
|D|^{-2} - 4 \ncint \Phi F |D|^{-3}. \\
&(c) \hspace{5mm}\ncint \vert D_{\Phi}\vert^{-1}  
=    \ncint |D|^{-1} -2 \ncint \Phi F |D|^{-2} + 2 \ncint \Phi^2 |D|^{-3}
  + 2 \ncint \Phi J \Phi J^{-1} |D|^{-3}.\\
&(d)  \hspace{5mm} \zeta_{D_{\Phi}}(0)  - \zeta_{D}(0) =
   2 \ncint \Phi D^{-1} 
   - \ncint \Phi (\Phi+J\Phi J^{-1}) D^{-2} \\ 
& \hspace{1cm}  
   - \ncint  [D,\Phi] (\Phi + J \Phi J^{-1}) D^{-3}
   + \tfrac{2}{3} \ncint \Phi^3 D^{-3} 
   +  2 \ncint \Phi^2 J \Phi J^{-1}  D^{-3}.
\end{align*}
\label{prop23}
\end{proposition}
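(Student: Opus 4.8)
The plan is to treat $D_\Phi = D + A$ with $A = \Phi + J\Phi J^{-1}$ as a bounded perturbation and to expand the (complex and integer) powers of $D_\Phi^2 = D^2 + B$, where $B = DA + AD + A^2$, in a Duhamel/resolvent series around $D^2$. Since $\ncint$ is a trace (the residue of $\mathrm{Tr}(\,\cdot\,|D|^{-s})$ at $s=0$), I may freely use cyclicity under $\ncint$, discard any operator of order $<-3$ (these are trace-class, so their residue vanishes), and replace factors of $A$ by commuting surrogates modulo lower order, using $D = F|D|$, $[F,\Phi]\in OP^{-\infty}$, $[F,J\Phi J^{-1}] = J[F,\Phi]J^{-1}\in OP^{-\infty}$, and the fact that $[|D|,\Phi]$ is of order zero. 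Two inputs drive the whole computation: the identities $D(D^2)^{-w} = (D^2)^{-w}D = F|D|^{1-2w}$, and the $J$-symmetry $\ncint J X J^{-1} = \overline{\ncint X}$ coming from $JDJ^{-1}=D$ (hence $JFJ^{-1}=F$, $J|D|J^{-1}=|D|$) together with the antilinearity of $J$; the latter turns $\ncint AF|D|^{-m}$ into $2\ncint\Phi F|D|^{-m}$ and $\ncint A^2|D|^{-m}$ into $2\ncint\Phi^2|D|^{-m} + 2\ncint\Phi J\Phi J^{-1}|D|^{-m}$, which is exactly how the stated right-hand sides are assembled. Because each extra factor of $B$ lowers the order by at least one, the perturbation series terminates after finitely many terms in each case.

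For (a)--(c) I would expand $|D_\Phi|^{-k} = (D^2+B)^{-k/2}$ and keep, under $\ncint$, only the part of order $-3$. For (a), $k=3$, the leading correction $-\tfrac32(D^2)^{-5/2}B$ is already of order $-4$ and everything beyond is lower, so $\ncint|D_\Phi|^{-3}=\ncint|D|^{-3}$. For (b), $k=2$, only the single-$B$ term $-(D^2)^{-1}B(D^2)^{-1}$ survives, and within $B = 2AD + [D,A] + A^2$ only $2AD$ reaches order $-3$; cyclicity then gives $-2\ncint AF|D|^{-3} = -4\ncint\Phi F|D|^{-3}$. For (c), $k=1$, I need the single-$B$ term (its $\{D,A\}$ part giving $-2\ncint\Phi F|D|^{-2}$ and its $A^2$ part giving $-\tfrac12\ncint A^2|D|^{-3}$) together with the double-$B$ term $+\tfrac38(D^2)^{-5/2}B^2$, whose only order-$(-3)$ survivor is $4ADAD(D^2)^{-5/2}$, which equals $A^2|D|^{-3}$ modulo lower order and thus contributes $+\tfrac32\ncint A^2|D|^{-3}$; the two $A^2$ pieces combine to $+\ncint A^2|D|^{-3}$, reproducing the last two terms of (c).

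The value $\zeta_{D_\Phi}(0)-\zeta_D(0)$ I would obtain from the Duhamel identity $\zeta_{D_\Phi}(s)-\zeta_D(s) = -\tfrac{s}{2}\int_0^1 \mathrm{Tr}\big(B(D^2+\epsilon B)^{-s/2-1}\big)\,d\epsilon$, expanding the integrand once more in $B$. Here the explicit factor of $s$ means that only the poles in $s$ survive the limit $s\to 0$, so each surviving term becomes a residue, that is, a $\ncint$; this is what both truncates the series and produces the powers $D^{-1},D^{-2},D^{-3}$ through $F^{m}|D|^{-m}=D^{-m}$. The first-order term in $B$ feeds the $\ncint\Phi D^{-1}$ and $\ncint\Phi(\Phi+J\Phi J^{-1})D^{-2}$ contributions, whereas the genuinely order-$(-3)$ pieces -- the commutator term $\ncint[D,\Phi](\Phi+J\Phi J^{-1})D^{-3}$ and the cubic terms $\tfrac{2}{3}\ncint\Phi^3D^{-3}$ and $2\ncint\Phi^2J\Phi J^{-1}D^{-3}$ -- come from the second- and third-order parts of the expansion, where now the commutators $[D,A]$ and $[|D|,A]$ can no longer be dropped.

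The main obstacle is precisely (d). Unlike (a)--(c), where discarding all commutators is legitimate because they land in order $<-3$, for $\zeta(0)$ the one-order-down terms $[D,\Phi]$ and $[|D|,\Phi]$ contribute at exactly the order that survives, so I must track operator ordering throughout and cannot simply commute $A$ past $|D|$ and $F$. The delicate points will be: controlling the combinatorial coefficients from the higher Duhamel terms, matching each pole in $s$ (simplicity of the dimension spectrum guarantees these poles are simple, which is what makes the residues and hence the $\ncint$ well defined), and checking that the antilinearity of $J$ acts only by complex conjugation on the real residues. Pinning down the rational coefficient $\tfrac{2}{3}$ and all the signs in (d) is where the genuine work lies; by contrast (a)--(c) are essentially forced by the order count together with the two symmetry inputs.
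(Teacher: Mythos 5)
Your route is genuinely different from the paper's. The paper proves nothing from first principles here: it cites \cite{EILS} directly --- Proposition 4.9 for (a), the proof of Lemma 4.10 for (b)--(c), and Lemma 4.5 for (d) --- and then merely specializes the perturbation to $C=\Phi+J\Phi J^{-1}$, using invariance of $\ncint$ under conjugation by $J$. Your resolvent/order-counting derivation of (a)--(c) is correct: the single-$B$ and double-$B$ coefficients $-\tfrac12$ and $+\tfrac32$ combining into $+\ncint A^2|D|^{-3}$, the cyclicity step turning $AD(D^2)^{-2}$ into $AF|D|^{-3}$, and the $J$-folding $\ncint A F|D|^{-m}=2\ncint\Phi F|D|^{-m}$ all check out, so for (a)--(c) you have in effect a self-contained re-proof of the lemmas the paper only cites; the citation buys brevity, your computation buys independence from \cite{EILS}.

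The genuine gap is (d), and you flag it yourself. Your Duhamel identity is the right starting point (it is essentially how such formulas are established in \cite{EILS} and by Chamseddine--Connes), but the proposal stops exactly where the content of (d) lies: you never arrive at the closed finite formula
$$
\zeta_{D_{C}}(0)-\zeta_{D}(0)=\ncint CD^{-1}-\tfrac12\ncint (CD^{-1})^2+\tfrac13\ncint (CD^{-1})^3 ,
$$
which is what the paper takes ready-made from Lemma 4.5 of \cite{EILS}, and which already encodes all the combinatorics your outline defers (the $1/k$ factors from the coupling-constant integration and the truncation at $k=3$ from the dimension). Without this formula, or an actual execution of your expansion to the end, the coefficients $2,\,-1,\,-1,\,\tfrac23,\,2$ in (d) are asserted rather than derived. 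Moreover, the remaining step --- rewriting $\ncint(CD^{-1})^k$ in the stated form --- is not purely mechanical: for $k=2$ one must commute $D^{-1}$ past $C$, which is what produces the $[D,\Phi]$ term, and then fold the $J$-conjugated integrals into $\Phi$-integrals. Since $\ncint JXJ^{-1}=\overline{\ncint X}$ yields complex conjugates rather than equalities, this folding requires the reality of each individual residue; your outline mentions this point but does not verify it, and it is precisely where care is needed (for instance, traciality gives $\ncint [D,X]D^{-3}=0$ for bounded $X$, which forces $\ncint[D,\Phi]\,J\Phi J^{-1}D^{-3}$ to be purely imaginary, so the folded commutator term needs a separate argument). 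In short: (a)--(c) stand as a correct alternative proof, while (d) remains a plan rather than a proof.
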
 

\begin{proof}
We use the results obtained in \cite{EILS}. First, from Proposition 4.9 we directly obtain (a), then using the proof of Lemma 4.10 (applied to our perturbation of the Dirac) gives us (b)-(c). Similarly, Lemma 4.5 \cite{EILS},
when applied to any perturbation of the Dirac operator of order $0$, say $C$,
gives:
$$
\ncint \zeta_{D_{C}}(0) - \zeta_{D}(0)
= S(C) = \ncint  C D^{-1} - \frac{1}{2} \ncint (C D^{-1})^2 +
\frac{1}{3} \ncint   (C D^{-1})^3.
$$
Taking $C=\Phi + J \Phi J^{-1}$ and using that the noncommutative integral
is invariant under conjugation by$J$ we obtain  (d).
\end{proof}

It is interesting to see the application of the above calculations in the
commutative case of a three-dimensional manifold $M$. 
Then $\Phi=\Phi^*=J \Phi J^{-1}$ is a function on $M$. To see that 
the spectral action simplifies significantly in this case we first observe: 

\begin{lemma}\label{intFl}
Let $D$ be a Dirac operator on $M$ and $F= \hbox{sign}(D)$. Then
$$ \ncint \phi F |D|^{-3} = 0, $$
for any function $\phi \in C^\infty(M)$.
\end{lemma}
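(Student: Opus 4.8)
The plan is to identify the noncommutative integral $\ncint$ with the Wodzicki residue and then exploit that the integrand is a classical pseudodifferential operator whose order equals $-\dim M$. First I would observe that $F=\mathrm{sign}(D)=D(D^2)^{-\frac12}$ is a classical $\Psi$DO of order $0$, that the multiplication operator $\phi$ has order $0$, and that $|D|^{-3}$ has order $-3$; hence $P:=\phi F|D|^{-3}$ has order $0+0+(-3)=-3=-\dim M$. For such an operator the noncommutative integral coincides, up to a positive universal normalization constant, with
$$\ncint P \;\propto\; \int_{S^*M}\mathrm{tr}\,\sigma_{-3}(P)(x,\xi)\,d\xi\,dx,$$
the integral over the cosphere bundle $S^*M$ of the fibrewise spinor trace of the degree $-3$ homogeneous part of the complete symbol.

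The key simplification is that, since $\mathrm{ord}(P)=-3$ is already the leading order, $\sigma_{-3}(P)$ is exactly the product of the principal symbols: every correction term in the composition formula is strictly of lower order and therefore does not enter. Using $\sigma_{\mathrm{pr}}(D)(x,\xi)=i\,c(\xi)$ (Clifford multiplication by $\xi$) and $\sigma_{\mathrm{pr}}(|D|)(x,\xi)=|\xi|$, I would compute $\sigma_{\mathrm{pr}}(F)=i\,c(\xi)/|\xi|$ and $\sigma_{\mathrm{pr}}(|D|^{-3})=|\xi|^{-3}$, so that
$$\sigma_{-3}(P)(x,\xi)=i\,\phi(x)\,\frac{c(\xi)}{|\xi|}\,|\xi|^{-3}=i\,\phi(x)\,c(\xi)\,|\xi|^{-4}.$$

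It then remains only to take the fibrewise trace. Clifford multiplication $c(\xi)=\gamma^j\xi_j$ by a single covector is traceless, $\mathrm{tr}\,c(\xi)=\xi_j\,\mathrm{tr}\,\gamma^j=0$, so the residue density vanishes pointwise and therefore $\ncint\phi F|D|^{-3}=0$. The same conclusion follows from a parity count: $D$ is of even-even symbol class, while $|D|$ and $F$ are of odd-even class, so $P=\phi F|D|^{-3}$ is again even-even of order $-3$; its degree $-3$ symbol is then odd under $\xi\mapsto-\xi$ and integrates to zero over the antipodally symmetric fibre sphere.

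The only points requiring care, rather than being genuine obstacles, are verifying that $F$ is a classical $\Psi$DO of order $0$ (standard, via the elliptic operator $D^2$ and its complex powers) and that finite-rank smoothing corrections arising from a possible kernel of $D$ do not affect the residue, since smoothing operators have vanishing Wodzicki residue. Once the symbol calculus is set up, the vanishing is immediate from the tracelessness of $c(\xi)$, so I expect no substantial difficulty beyond the bookkeeping.
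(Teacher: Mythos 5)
Your proof is correct and follows essentially the same route as the paper: both identify $\ncint$ with the Wodzicki residue, observe that $\phi F|D|^{-3}$ is a classical pseudodifferential operator of order $-3=-\dim M$ so that only the product of the principal symbols contributes to the residue density, and conclude from the vanishing of the Clifford (fibrewise spinor) trace of $c(\xi)$. The only cosmetic difference is that the paper first rewrites $\phi F|D|^{-3}$ as $\phi D D^{-4}$ so that the scalar factor $|\xi|^{-4}$ is manifest, whereas you compute the principal symbol of $F$ directly; the substance is identical.
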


\begin{proof}
Indeed, since we are dealing with pseudodifferential operators (both $f$ 
and $F$ are of order $0$, $D$ is of order $1$) we can use the symbols 
and the relation of the noncommutative integral with the Wodzicki 
residue. In fact, when we rewrite $f F |D|^{-3}$ as $f D D^{-4}$
we see that the principal symbol of this expression is scalar multiple 
of the symbol of $D$ (as the leading symbol of $D^{-4}$ is scalar). Hence its 
Clifford trace vanishes.
\end{proof}

Finally, we can show:
\begin{proposition}
The leading terms of the perturbative expansion of the spectral action 
for the Dirac operator with torsion on the three-dimensional manifold 
are: 
\begin{equation}
\begin{aligned}
& \Lambda^3 \hbox{\ term:\ } \,\, \sim \hbox{Vol}(M) \\
& \Lambda^1 \hbox{\ term:\ } \,\, \sim \frac{1}{2\pi^2} \left( -\frac{1}{12} \int_M R 
                                      + 8 \int_M \Phi^2  \right).
\end{aligned}
\end{equation}
in particular, there are no $\Lambda^2$ terms and scale-invariant contributions. 
\end{proposition}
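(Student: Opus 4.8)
The plan is to read off the four leading terms of the spectral action $\hbox{Tr}\, f(D_\Phi/\Lambda)$ from its heat/zeta expansion, identifying the $\Lambda^3$, $\Lambda^2$, $\Lambda^1$ and scale-invariant ($\Lambda^0$) coefficients with the four quantities $\ncint|D_\Phi|^{-3}$, $\ncint|D_\Phi|^{-2}$, $\ncint|D_\Phi|^{-1}$ and $\zeta_{D_\Phi}(0)$ respectively, and then to evaluate each one using Proposition \ref{prop23}. The decisive simplification is that in the commutative case $J\Phi J^{-1}=\Phi$, so that the two $|D|^{-3}$ contributions in part (c) coalesce and the perturbation is effectively $D_\Phi=D+2\Phi$. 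For the $\Lambda^3$ term, part (a) gives $\ncint|D_\Phi|^{-3}=\ncint|D|^{-3}$, which is insensitive to $\Phi$; computing this residue directly (the leading symbol $|\xi|^{-3}$ has scalar Clifford trace $n=2$, integrated over the unit cosphere $S^2$) yields $\ncint|D|^{-3}=\frac1{\pi^2}\hbox{Vol}(M)$, so the $\Lambda^3$ term is $\sim\hbox{Vol}(M)$. For the $\Lambda^2$ term, part (b) gives $\ncint|D_\Phi|^{-2}=\ncint|D|^{-2}-4\ncint\Phi F|D|^{-3}$; the second residue vanishes by Lemma \ref{intFl}, while the first vanishes because the order $-3$ homogeneous component of the symbol of $|D|^{-2}$ is linear, hence odd, in $\xi$ and integrates to zero over the cosphere. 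Hence there is no $\Lambda^2$ term.

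For the $\Lambda^1$ term I would start from part (c), set $J\Phi J^{-1}=\Phi$ to combine the last two terms into $4\ncint\Phi^2|D|^{-3}$, and argue that the remaining cross term $\ncint\Phi F|D|^{-2}$ vanishes by the same Clifford-trace-and-parity mechanism as in Lemma \ref{intFl}. What survives is $\ncint|D_\Phi|^{-1}=\ncint|D|^{-1}+4\ncint\Phi^2|D|^{-3}$. The second residue has the same scalar leading symbol as $\ncint|D|^{-3}$ weighted by $\phi^2$, so $\ncint\Phi^2|D|^{-3}=\frac1{\pi^2}\int_M\Phi^2$; the first is the Kastler--Kalau--Walze residue $\ncint|D|^{-1}=-\frac1{24\pi^2}\int_M R$. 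Assembling, $\ncint|D_\Phi|^{-1}=\frac1{\pi^2}\left(-\frac1{24}\int_M R+4\int_M\Phi^2\right)=\frac1{2\pi^2}\left(-\frac1{12}\int_M R+8\int_M\Phi^2\right)$, which is the claimed $\Lambda^1$ term. The factor $8$ is exactly the factor $2$ of the single-field heat-kernel result in (\ref{a_13}) enhanced by the square of the doubling $\phi\mapsto2\phi$, which provides an independent check on the relative coefficients.

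Finally, for the scale-invariant term I would use part (d) with $J\Phi J^{-1}=\Phi$ and verify that every summand vanishes: $\ncint\Phi D^{-1}$ and $\ncint\Phi^2 D^{-2}$ vanish by the odd-$\xi$ parity of the relevant order $-3$ symbol component, while $\ncint[D,\Phi]\Phi D^{-3}$, $\ncint\Phi^3 D^{-3}$ and $\ncint\Phi^2 J\Phi J^{-1}D^{-3}$ are genuine order $-3$ residues whose Clifford traces are, respectively, proportional to $\langle d\phi,\xi\rangle$ (odd, hence zero after the cosphere integration) or carry a single Clifford generator (trace zero). Thus $\zeta_{D_\Phi}(0)=\zeta_D(0)$, and since $M$ is a closed odd-dimensional manifold the base value carries no local part (the odd Gilkey coefficient vanishes); hence there is no scale-invariant contribution.

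The routine part is the bookkeeping of numerical constants. The genuine obstacle is the family of vanishing statements for the mixed residues. The cleanest of these, $\ncint\Phi F|D|^{-3}$, is Lemma \ref{intFl}, but the order $-2$ residue $\ncint\Phi F|D|^{-2}$ entering the $\Lambda^1$ term is more delicate: since it sits one order above the residue order, one cannot read it off the leading symbol and must descend to the subleading term, showing that its only Clifford-trace-surviving piece remains odd in $\xi$ and so integrates to zero on $S^2$. I expect this step, together with the parallel verifications in part (d), to be where the real work lies; the identification of the coefficients $\frac1{\pi^2}$ and $-\frac1{24\pi^2}$ then follows from the standard Wodzicki-residue and Kastler--Kalau--Walze computations.
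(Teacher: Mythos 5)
Your proposal is correct and follows essentially the same route as the paper: evaluate the four coefficients via Proposition \ref{prop23}, use the commutative identification $J\Phi J^{-1}=\Phi$ so the perturbation doubles, kill the mixed terms by the Clifford-trace/parity arguments of Lemma \ref{intFl} and the symbol calculus (the paper invokes Kastler's explicit parametrix where you compute the traces directly), and finish with the standard residues $\ncint|D|^{-3}=\frac{1}{\pi^2}\hbox{Vol}(M)$ and $\ncint|D|^{-1}=-\frac{1}{24\pi^2}\int_M R$. If anything, you are more explicit than the paper about the one genuinely delicate point, namely that $\ncint\Phi F|D|^{-2}$ sits one order above the residue order and requires the subleading symbol, which the paper dispatches with ``extending [Lemma \ref{intFl}] slightly.''
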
              
 
\begin{proof}
First, observe that using Lemma \ref{intFl} (and extending it slightly) we can 
at once say that these terms must vanish:
$$ \ncint \Phi F |D|^{-3}, \, \, \ncint \Phi^3 F |D^{-1}|. $$ 
Furthermore, repeating the arguments of the symbol calculus for the Dirac 
operator we see that, on any manifold of dimension $3$:
$$ \ncint \Phi D^{-1}, \,\, \ncint \Phi^3 D^{-3},$$
must also vanish.

A bit more work is necessary to show that
$$\ncint  \Phi [D,\Phi] D^{-3}, \,\, \ncint  \Phi^2 D^{-2}, $$
vanish. Here, one can use explicit calculations by Kastler \cite{Kast}. Using the
explicit form of the paramatrix for $D^{-2}$ we see that that its component of 
order $-3$, $\sigma_{-3}(\xi,x)$ is an odd polynomial in $\xi$. Therefore the 
last integral in the above list is $0$. For the first one we need to use the trace 
property of the noncommutative integral. 

Finally, using the identities:

$$ 
\begin{aligned}
& \ncint |D^{-3}| = \frac{1}{\pi^2} \hbox{Vol}(M), \\
& \ncint \psi |D^{-3}| = \frac{1}{\pi^2} \int_M \psi, \\
& \ncint |D^{-1}| = -\frac{1}{24\pi^2} \int_M R(g) 
\end{aligned}
$$
where $\psi \in C^\infty(M)$ and $R(g)$ is the scalar curvature of $M$,
we obtain the result (\ref{a_13}) (recall that $\phi = \frac{1}{2} \Phi$). 
\end{proof}
 
\section{The quantum sphere $SU_q(2)$}
 
We shall show here that the spectral action terms for the quantum deformation
of the three-sphere are not much different from the undeformed case.  We 
start with the undeformed invariant Dirac operator and the spectral triples 
as described in \cite{Naiad} (we omit here details of the notation and 
construction of spectral geometry).

We take a torsion term $\Phi$ being a finite sum of homogeneous polynomials 
in $a,a^*,b$:
\begin{equation}
\Phi = \sum C_{\alpha, \beta, \gamma} a^\alpha b^\beta (b^*)^\gamma, 
\label{fphi}
\end{equation}
where $\alpha \in \Z$ (if $\alpha<0$ then we take $(a^*)^{|\alpha|}$) 
and $\beta,\gamma \in \N$. For simplicity we do not take into account 
the condition of $J$-reality, thus we restrict ourselves only to $D + \Phi$.
If $\Phi$ is selfadjoint then:
$$ C_{\alpha, \beta, \gamma} = q^{\alpha(\gamma+\beta)} \overline{C_{-\alpha, \gamma, \beta}}.$$ 

\begin{proposition}
\label{suqSpAc}
The coefficients of the leading terms of the perturbative spectral action on 
a real spectral triple with arbitrary torsion $\Phi$ over 
$SU_q(2)$ are: 

\begin{align*}
&(a) \hspace{5mm} \ncint \vert D_{\Phi}\vert^{-3}  = 2.\\
&(b) \hspace{5mm} \ncint \vert D_{\Phi}\vert^{-2}  = 0.\\
&(c) \hspace{5mm}\ncint \vert D_{\Phi}\vert^{-1}  
=  -\frac{1}{2} + \ncint \Phi^2 |D|^{-3}.
\end{align*}  
where we have used that for the standard Dirac operator on $S^3$:
$$ 
   \ncint \vert D\vert^{-3} = 2, \;\;\; 
   \ncint \vert D\vert^{-1} = -\frac{1}{2}.
$$
For the form of $\Phi$ as assumed in (\ref{fphi}) we have:
$$ \ncint \Phi^2 |D|^{-3} = \sum _\alpha |C_{\alpha, 0, 0}|^2. $$
\end{proposition}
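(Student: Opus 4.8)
The plan is to reduce everything to the residue calculus of \cite{EILS} and then to evaluate the resulting noncommutative integrals directly on the equivariant spectral triple of \cite{Naiad}. Since here $D_\Phi=D+\Phi$ with $\Phi$ a selfadjoint order-zero perturbation and no $J\Phi J^{-1}$ term, I would apply the formulas quoted in the proof of Proposition~\ref{prop23} (Proposition 4.9 and Lemma 4.10 of \cite{EILS}) to the bare perturbation $C=\Phi$. This produces
\begin{align*}
\ncint |D_\Phi|^{-3} &= \ncint |D|^{-3}, \\
\ncint |D_\Phi|^{-2} &= \ncint |D|^{-2} - 2\ncint \Phi F |D|^{-3}, \\
\ncint |D_\Phi|^{-1} &= \ncint |D|^{-1} - \ncint \Phi F |D|^{-2} + \ncint \Phi^2 |D|^{-3},
\end{align*}
so that (a) is immediate from $\ncint|D|^{-3}=2$, and (b), (c) are reduced to evaluating the remaining integrals on the right.

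The next step is to dispose of the purely spectral terms and of the terms carrying $F=\mathrm{sign}(D)$. For the equivariant Dirac operator of \cite{Naiad} the spectrum and multiplicities coincide with those of the round $S^3$, so $\ncint|D|^{-3}=2$ and $\ncint|D|^{-1}=-\tfrac12$ are the quoted values and, exactly as in the odd-dimensional commutative case, the subleading residue vanishes, $\ncint|D|^{-2}=0$. For the $F$-terms I would use the $SU_q(2)$ analogue of Lemma~\ref{intFl}: writing the residue as a signed sum over the $\pm$ eigenspaces of $D$, the operator $F$ contributes opposite signs on the $\lambda$ and $-\lambda$ eigenspaces, while the scalar $\Phi$ (acting identically on both spinor components) has equal partial traces on these two isomorphic modules; hence the signed sum cancels and $\ncint\Phi F|D|^{-3}=\ncint\Phi F|D|^{-2}=0$. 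This already gives (b), and leaves (c) depending only on $\ncint\Phi^2|D|^{-3}$.

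It remains to prove $\ncint\Phi^2|D|^{-3}=\sum_\alpha |C_{\alpha,0,0}|^2$, which is the substance of the statement. The mechanism I would exploit is that, just as $\ncint\psi|D|^{-3}$ localizes on the classical measure in the commutative case, the functional $X\mapsto\ncint X|D|^{-3}$ on $SU_q(2)$ is supported on the classical maximal circle $\{b=0\}\cong U(1)$: the defining relation $a^*a+b^*b=1$ together with $b^*b\to0$ on the large-weight regime that dominates the residue means that every monomial carrying a surviving power of $b$ or $b^*$ is summable against the $\lambda^{-3}$ tail and contributes nothing, while $a^\alpha$ restricts on $\{b=0\}$ to the character $e^{i\alpha\theta}$. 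Thus $\ncint\Phi^2|D|^{-3}$ equals, up to the normalization fixed by $\ncint|D|^{-3}=2$, the $U(1)$-integral of $|\sum_\alpha C_{\alpha,0,0}e^{i\alpha\theta}|^2$, and Parseval yields $\sum_\alpha|C_{\alpha,0,0}|^2$.

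The main obstacle is precisely this last step: controlling, term by term, how each monomial $a^\alpha b^\beta(b^*)^\gamma$ rescales the eigenvalue asymptotics, and proving rigorously that no combination of $b$- and $b^*$-factors inside $\Phi^2$ can conspire to recreate a diagonal operator with nonzero asymptotic value. The cleanest route I foresee is to pass to the explicit weight decomposition of $\CH$ from \cite{Naiad}, identify $\ncint X|D|^{-3}$ with the coefficient of the logarithmic divergence of $\sum\langle e,Xe\rangle\,\lambda^{-3}$ over the eigenbasis, normal-order $\Phi^2$ using $a^*a+b^*b=1$, $ab=qba$ and $bb^*=b^*b$, and check directly on the weight vectors that each non-diagonal or $b$-loaded term produces a convergent rather than logarithmically divergent trace, so that only the $\beta=\gamma=0$ part contributes.
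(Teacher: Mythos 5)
Your reduction step is sound and in fact matches the paper's: the paper likewise derives (a)--(c) from the perturbative expansion of \cite{EILS} (its Proposition~\ref{prop23}, here correctly specialized by you to the bare perturbation $C=\Phi$, since on $SU_q(2)$ the $J\Phi J^{-1}$ term is dropped), and the values $\ncint |D|^{-3}=2$, $\ncint |D|^{-2}=0$, $\ncint |D|^{-1}=-\tfrac12$ are purely spectral consequences of isospectrality with the round $S^3$. The genuine gap is your treatment of the linear terms $\ncint \Phi F|D|^{-3}$ and $\ncint \Phi F|D|^{-2}$. Your cancellation argument rests on the claim that $\Phi$ ``acts identically on both spinor components'', so that its partial traces over the eigenspaces of $D$ with eigenvalues $\lambda$ and $-\lambda$ coincide. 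For the triple of \cite{Naiad} this is false: the spinor representation mixes the up and down subspaces with $q$-dependent coefficients, and the two eigenspaces ($W_j^\up$ for $\lambda = 2j+\tfrac32$ and $W_{j+1/2}^\dn$ for $-\lambda$) are isomorphic as vector spaces but carry genuinely different restrictions of $\pi(\Phi)$; their partial traces agree only up to lower-order corrections, and whether those corrections contribute to the residues at $s=3$ and $s=2$ is precisely the nontrivial point. The paper does not attempt any such symmetry argument: it invokes Theorem 3.4 of \cite{ILS}, whose proof requires the approximate-representation machinery and explicit asymptotics of the diagonal matrix elements. Without that input (or an equivalent quantitative bound on the difference of the signed partial traces) your (b) and the linear part of (c) are unproven.

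For $\ncint \Phi^2|D|^{-3}$ your mechanism (monomials containing $b$ or $b^*$ are $q$-suppressed, so the partial traces grow like $j$ rather than $j^2$, the pole shifts from $s=3$ to $s=2$, and only the restriction to $b=0$ survives) is exactly what underlies the explicit formulas of \cite{ILS} that the paper cites, and your plan via the weight decomposition is the correct way to make it rigorous; but you stop at the step you yourself identify as the main obstacle, so this part is a program rather than a proof. Moreover, your normalization is internally inconsistent: if $\ncint X|D|^{-3}$ is proportional to the $U(1)$-average of $X|_{b=0}$ with the constant fixed by $\ncint 1\cdot|D|^{-3}=2$, then Parseval gives $2\sum_\alpha |C_{\alpha,0,0}|^2$, not $\sum_\alpha |C_{\alpha,0,0}|^2$; you cannot run that argument and then assert the stated value. (The test $\Phi=1$, i.e.\ $C_{0,0,0}=1$, exhibits the same tension inside the proposition itself, since then $\ncint \Phi^2|D|^{-3}=\ncint |D|^{-3}=2$ while $\sum_\alpha |C_{\alpha,0,0}|^2=1$; so the factor of $2$ has to be pinned down against the conventions of \cite{ILS}, not waved through.)
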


\begin{proof}
Using the results for the classical Dirac operator and the noncommutative expansion
from Proposition \ref{prop23} we obtain (a) and the part of (b). The vanishing of 
the terms linear in $\Phi$ in (b) and (c) is a consequence of Theorem 3.4 \cite{ILS}. 

To calculate the noncommutative integral of $\Phi^2 |D|^{-3}$ we use the explicit 
form of it from \cite{ILS}.
\end{proof}

\begin{remark}
Observe that unlike in the classical case the second term of the spectral action 
has no single minimum as the moduli space of scalar perturbations, for which it 
reaches the minimum value is infinite dimensional (as it contains all functions, 
which have nontrivial dependence on $b$ or $b^*$). Therefore, the proposed spectral
condition for the vanishing of torsion makes no sense in the considered noncommutative 
geometry of $SU_q(2)$.
\end{remark}

Let us turn our attention to the terms, which vanish identically in 
the classical situation.
\begin{proposition}
\label{suqSpAcIn}
The scale invariant part of the action does not vanish and we have:
\begin{align*}
&  \hspace{5mm} \zeta_{D_{\Phi}}(0)  - \zeta_{D}(0) = \\
& \hspace{2cm}   =   - \frac{1}{2} 
\sum_{\alpha \in \Z}
\sum_{l \geq 0}
\sum_{0 \leq m,n \leq l}
\overline{C_{\alpha, l-n,m}} C_{\alpha, l-m, n} \\
& \hspace{15mm}
\left( \sum_{k=0}^{\alpha} (-1)^k q^{k(k-1)} 
\left[ \alpha \atop k \right]_{q^2} 
\frac{4}{1-q^{2(k+l)}} \right). 
\end{align*}
where 
$$ 
\left[ n \atop k \right]_{q^2} = { \prod_{i=1}^n (1-q^{2i}) \over
\left( \prod_{i=1}^k (1-q^{2i})  \right)  \left( \prod_{i=1}^{n-k} (1-q^{2i})  \right)}, 
$$
is the quantum binomial.
\end{proposition}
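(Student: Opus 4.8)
The plan is to reduce the scale-invariant coefficient to a single noncommutative integral and then evaluate it with the representation theory of $SU_q(2)$. Since here $D_\Phi=D+\Phi$ with $\Phi$ of order $0$ (only the summand $J\Phi J^{-1}$ of the full torsion Dirac is dropped, while $\Phi=\Phi^*$ is retained), I would start exactly as in the proof of Proposition~\ref{prop23}, applying Lemma~4.5 of \cite{EILS} with $C=\Phi$:
$$\zeta_{D_\Phi}(0)-\zeta_D(0)=\ncint\Phi D^{-1}-\tfrac12\ncint(\Phi D^{-1})^2+\tfrac13\ncint(\Phi D^{-1})^3.$$
The overall factor $-\tfrac12$ in the claimed formula already signals that only the quadratic term should survive, so the first task is to discard the linear and cubic pieces.

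For the odd-order terms I would argue by the symmetry of the spectrum. The linear term $\ncint\Phi D^{-1}$ vanishes by Theorem~3.4 of \cite{ILS}, the same input used in Proposition~\ref{suqSpAc}. For the cubic term I would use that the spectrum of the equivariant Dirac operator on $SU_q(2)$ is symmetric about $0$: there is a symmetry $U$ implementing $D\mapsto-D$ that acts on the spinor index only, on which $\pi(\CA)$ is trivial, so $U$ commutes with $\Phi$. Conjugation by $U$ sends $\Phi D^{-1}\mapsto-\Phi D^{-1}$, hence $\ncint(\Phi D^{-1})^n=(-1)^n\ncint(\Phi D^{-1})^n$ and both the $n=1$ and $n=3$ integrals vanish. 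This leaves $\zeta_{D_\Phi}(0)-\zeta_D(0)=-\tfrac12\ncint\Phi D^{-1}\Phi D^{-1}$, which must be shown to equal the stated sum.

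The heart of the proof is the explicit evaluation of $\ncint\Phi D^{-1}\Phi D^{-1}$ in the Peter--Weyl spinor basis of \cite{Naiad}, using the action of $a,a^*,b,b^*$ and the eigenvalues of $D$ recorded in \cite{ILS}. Writing $\ncint$ as the residue extracted from $\tr(\,\cdot\,|D|^{-s})$, it reduces to the large-block asymptotics of the diagonal, weight-preserving matrix elements of the product. Inserting $\Phi=\sum C_{\alpha,\beta,\gamma}\,a^\alpha b^\beta(b^*)^\gamma$, only pairs of monomials of vanishing total torus weight contribute; this forces the second monomial to carry the opposite $a$-power $-\alpha$ and locks the $b,b^*$ exponents, leaving a pair $C_{\alpha,l-m,n}\,C_{-\alpha,m,l-n}$ indexed by a block label $l\ge0$ and internal indices $0\le m,n\le l$ (one checks the net $b$-degree $(l-m-n)+(m-l+n)=0$). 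Because $\Phi$ is selfadjoint, the relation $C_{\alpha,\beta,\gamma}=q^{\alpha(\beta+\gamma)}\overline{C_{-\alpha,\gamma,\beta}}$ converts $C_{-\alpha,m,l-n}$ into $\overline{C_{\alpha,l-n,m}}$, producing precisely the bilinear combination $\overline{C_{\alpha,l-n,m}}\,C_{\alpha,l-m,n}$.

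What remains is the radial operator $(a^*)^\alpha a^\alpha$, which I would expand by the $q$-binomial identity $(a^*)^\alpha a^\alpha=\sum_{k=0}^{\alpha}(-1)^k q^{k(k-1)}\left[\alpha\atop k\right]_{q^2}(b^*b)^k$; its coefficients are exactly the $(-1)^k q^{k(k-1)}\left[\alpha\atop k\right]_{q^2}$ of the statement. Since the diagonal matrix elements of $(b^*b)^k$ decay geometrically in the block, summing over the block label converts the resulting geometric series into $\frac{1}{1-q^{2(k+l)}}$, while the normalisation fixed by $\ncint|D|^{-3}=2$ and the two-dimensional spinor fibre supplies the factor $4$. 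The main obstacle is precisely this last step: correctly isolating the diagonal part of two shift operators separated by $D^{-1}$ and tracking every $q$-power through the sums over $m,n$ and over the block so as to land on exactly $\frac{4}{1-q^{2(k+l)}}$. It is here that the explicit $SU_q(2)$ formulas of \cite{ILS} for $\ncint\Phi^2|D|^{-3}$ would be leaned on heavily, the present integral being their off-diagonal refinement.
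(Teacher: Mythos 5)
Your overall skeleton matches the paper's: expand via Lemma 4.5 of \cite{EILS}, discard the odd-order terms, and evaluate the quadratic one by extracting the zero-weight (diagonal) part of $\Phi^2$ and expanding $a^{-\alpha}a^{\alpha}$ by the $q$-binomial formula; your weight-counting and the pairing $\overline{C_{\alpha,l-n,m}}C_{\alpha,l-m,n}$ agree with the paper. However, your justification for discarding the odd terms is not valid. There is no unitary $U$ ``acting on the spinor index only'' that commutes with $\pi(\CA)$ and anticommutes with $D$: in dimension $3$ the spinor fibre is $\C^2$, and an endomorphism anticommuting with all three Clifford generators (Pauli matrices) must vanish (anticommuting with $\sigma_1,\sigma_2$ forces commuting with $\sigma_3=-i\sigma_1\sigma_2$, and commuting plus anticommuting with the invertible $\sigma_3$ gives $U=0$). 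Moreover, for $SU_q(2)$ the equivariant representation of \cite{Naiad} genuinely mixes the up and down spinor components, so $\pi(\CA)$ is not ``trivial on the spinor index''; the symmetry of the spectrum of $D$ comes from structures (classically, an orientation-reversing isometry) that do \emph{not} commute with the algebra. The paper's actual mechanism is algebraic and specific to $SU_q(2)$: only the diagonal part of $\Phi$ (resp.\ $\Phi^3$), a function of $bb^*$, can contribute, and that part is invariant under the automorphism $\rho(a)=a$, $\rho(b)=b^*$; Lemma 5.17 of \cite{ILS} gives $\ncint p\,D^{-k}=(-1)^{k}\ncint\rho(p)\,D^{-k}$, so for odd $k$ both $\ncint\Phi D^{-1}$ and $\ncint\Phi^3 D^{-3}$ vanish. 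You need this (or Theorem 3.4 of \cite{ILS}) in place of the nonexistent $U$.

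The second gap is in the quadratic term. You leave it as $-\frac12\ncint\Phi D^{-1}\Phi D^{-1}$ and propose to evaluate it directly by ``leaning on'' the $SU_q(2)$ formulas of \cite{ILS}; but those formulas (Lemma 5.14) compute $\ncint x\,D^{-2}$ for $x$ in the algebra, not integrals with $D^{-1}$ interposed between algebra elements. The missing step --- and it is where the paper does real work --- is to commute $D^{-1}$ past $\Phi$, writing (modulo terms whose integrals vanish for order reasons) $\ncint(\Phi D^{-1})^2=\ncint\Phi^2D^{-2}-\ncint\Phi[D,\Phi]D^{-3}$, and then to prove $\ncint\Phi[D,\Phi]D^{-3}=0$, which the paper does by rewriting it as $\tfrac12\ncint\delta(\Phi^2)\,|D|^{-3}$ and applying the diagonal expansion of $\Phi^2$. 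Without this reduction, the interposed $D^{-1}$ weights each spectral block by a different eigenvalue, and your geometric-series bookkeeping has no basis to land on $\frac{4}{1-q^{2(k+l)}}$, which is precisely Lemma 5.14 of \cite{ILS} applied to $(bb^*)^{k+l}D^{-2}$ after the $q$-binomial expansion. With these two repairs your argument becomes the paper's proof.
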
 

\begin{proof}
First, let us see that using the same expansion as from the proof of proposition
\ref{prop23} the scale invariant terms are:
$$ 
\ncint \Phi D^{-1} 
  - \frac{1}{2} \ncint \Phi^2 D^{-2}  
  + \frac{1}{2} \ncint \Phi [D,\Phi]  D^{-3} 
  + \frac{1}{3} \ncint \Phi^3 D^{-3}. 
$$

Let us begin with the first term, $\ncint \Phi D^{-1}$. It is clear 
that only the terms with $(bb^*)^n$ can possibly contribute. The 
shortest way to show that the integral vanishes:
$$ \ncint (bb^*)^n D^{-1} = 0, $$
is to use the property of the following algebra automorphism 
of $SU_q(2)$:
\begin{align}
\rho(a):=a, \;\; \rho(a^*):=a^*,\;\; \rho(b):=b^*,\;\; \rho(b^*):=b.
\label{autominv}
\end{align} 

Using the Lemma 5.17 of \cite{ILS}, we see that for any 
homogeneous polynomial in $a,a^*,b,b^*$ the noncommutative 
integral:
$$
\ncint p(a,a^*,b,b^*) \, D^{-k} = 
 (-1)^{k} \ncint p(\rho(a),\rho(a^*),\rho(b),\rho(b^*) D^{-k},
$$
is (up to sign) invariant with respect to $\rho$. Since in our case
$k=1$ and $\rho(bb^*)=bb^*$ the integral must vanish. In fact, this 
proves also that the last term, $\frac{1}{3} \ncint \Phi^3 D^{-3}$, 
vanishes as well.

For the second term of the scale invariant part we again use 
Lemma 5.14 of \cite{ILS}. First of all, we calculate $\Phi^2$:
$$ 
\Phi^2 = \sum_{\alpha,\beta,\gamma,\alpha',\beta',\gamma'}  
C_{\alpha, \beta, \gamma} C_{\alpha', \beta', \gamma'}
a^{\alpha} b^\beta (b^*)^\gamma
a^{\alpha'} b^{\beta'} (b^*)^{\gamma'}. 
$$

Then, the diagonal part is the sum of elements where 
$\alpha'=-\alpha$ and $\beta+\beta' = \gamma+\gamma'$:

$$ 
\begin{aligned}
\Phi^2 &\sim_{\hbox{diag}} 
\sum_{\alpha \in \Z}
\sum_{l \geq 0}
\sum_{0 \leq m,n \leq l}
C_{-\alpha, m, l-n} C_{\alpha, l-m, n}
a^{-\alpha} b^{m} (b^*)^{l-n}
a^{\alpha} b^{l-m} (b^*)^{n} \\
&=
\sum_{\alpha \in \Z}
\sum_{l \geq 0}
\sum_{0 \leq m,n \leq l}
\overline{C_{\alpha, l-n,m}} C_{\alpha, l-m, n}
a^{\alpha} a^{-\alpha} (bb^*)^{l}.
\end{aligned}
$$

Next, the commutation relations for $SU_q(2)$ yields:
$$ 
\begin{aligned}
a^{-\alpha} a^\alpha =& (1-q^{2\alpha-2} bb^*)(1-q^{2\alpha-4} bb^*) \cdots (1-bb^*) \\
=& \sum_{k=0}^\alpha (-1)^k q^{k(k-1)} \left[ \alpha \atop k \right]_{q^2} (bb^*)^k. 
\end{aligned}
$$
Next, using it and Lemma 5.14 we obtain:
$$ 
\begin{aligned}
\ncint \Phi^2 D^{-2} &= 
\sum_{\alpha \in \Z}
\sum_{l \geq 0}
\sum_{0 \leq m,n \leq l}
\overline{C_{\alpha, l-n,m}} C_{\alpha, l-m, n} \\
&\hspace{15mm} \left( \sum_{k=0}^{\alpha} (-1)^k q^{k(k-1)} 
\left[ \alpha \atop k \right]_{q^2} 
\frac{4}{1-q^{2(k+l)}} \right). 
\end{aligned}
$$   
The next component of the scalar invariant part vanishes:
$$ \ncint [D,\Phi] \Phi D^{-3} = 0. $$
This is an easy consequence of the fact that one can rewrite 
the above expression as
$$ \ncint \delta(\Phi) \Phi |D|^{-3} = \frac{1}{2} \ncint \delta(\Phi^2) |D|^{-3}, $$
and then, using expansion of diagonal terms of $\Phi^2$ we obtain the desired result.
\end{proof}
\section{Conclusions}
We have seen that in the case of three-dimensional manifolds the torsion term 
contributes only to the spectral action through a quadratic term. Therefore,
the postulate to restrict to torsion-free geometries by minimizing the term,
which comes from the noncommutative integral of $|D|^{-n+2}$, at a fixed 
metric appeared plausible. Indeed, it works in the classical commutative 
case and might be extended to other torsion-type perturbation of the Dirac
operators. 

The scalar type perturbations considered in this paper only in dimension $3$ 
have the interpretation of torsion and appear to have no direct geometrical
meaning in higher dimensions. We have shown that in the case of lowest possible
dimension for which they are nontrivial the next term spectral action includes 
coupling of the scalar field to the scalar curvature and the standard kinetic 
term, thus making the field dynamical. In higher dimensions an extra selfinteraction 
quartic term appears thus making it possible that in the first three terms one 
might have a nonzero vacuum value of the scalar perturbation.

The analysis of the spectral action for the genuine noncommutative example yields
even more interesting result. Clearly, for the $SU_q(2)$ it is no longer possible 
to eliminate the torsion by minimizing the $|D|^{-1}$ term of the spectral action.
In addition, unlike in the classical case the scale invariant terms do not vanish.
It appears that there is no reason to eliminate the scalar perturbation from the
considerations in the setup of noncommutative geometry. Its geometrical meaning 
and consequences for the model building in physics are still to be discussed.

\end{document}